\newtheorem{thm}{Theorem}[section]   
\newtheorem{defn}[thm]{Definition}
\newtheorem{lem}[thm]{Lemma}
\newtheorem{ex}{Example}
\newcommand{\fraction}[2]{\raisebox{0.5ex}{$ #1 $}\slash\raisebox{-0.5ex}{$ #2 $}}
\def\NN{\mathbb N}
\def\B{{\mathcal B}}
\def\S{{\mathcal S}}
\def\F{{\mathcal F}}
\title{Use of Signed Permutations in Cryptography}
\author{Iharantsoa Vero RAHARINIRINA\\\texttt{ihvero@yahoo.fr}\\\small{Department of Mathematics and computer science},\\\small{Faculty of Sciences, BP 906 Antananarivo 101, Madagascar}}
\begin{document}
\maketitle
\begin{abstract}
In this paper we consider cryptographic applications of the arithmetic on the hyperoctahedral group. On an appropriate subgroup of the latter, we particularly propose to construct public key cryptosystems based on the discrete logarithm. The fact that the group of signed permutations has rich properties provides fast and  easy implementation and makes these systems resistant to attacks like the Pohlig-Hellman algorithm. The only negative point is that storing and transmitting permutations need large memory. Using together the hyperoctahedral enumeration system and what is called subexceedant functions, we define a one-to-one correspondance between natural numbers and signed permutations with which we label the message units.
\paragraph{Keywords :}Signed Permutation, Public Key System, Discrete Logarithm Problem, Hyperoctahedral Enumeration System,  Subexceedant Function 
\end{abstract}
\section{Introduction}
The term cryptography refers to the study of techniques providing information security. Several mathematical objects have been used in this field to label the so-called message units. These objects are often integers. They may also be points or vectors on some curves. For this work, we shall use signed permutations.\\
Let us denote by:
\begin{itemize}
	\item $ [n] $ the set $ \{1,\cdots ,n\} $,
	\item $ [\pm n] $ the set $ \{-n,\cdots,-1,1,\cdots ,n\} $,
	\item $ \mathcal{S}_n$ the symmetric group of degree $ n $.
\end{itemize}
\begin{defn}
	A bijection $ \pi :[\pm n]\longrightarrow [\pm n] $ satisfying $ \pi(-i)=-\pi(i) $ for all $ i \in [\pm n] $ is called "signed permutation".
\end{defn}
We can also write a signed permutation $ \pi $ in the form
\begin{equation*}
\pi = \left( \begin{array}{cccc} 1 & 2 & \dots & n\\ 
\varepsilon_1 \sigma_1 & \varepsilon_2 \sigma_2 & \dots & \varepsilon_n\sigma_n \end{array} \right) \; 
\text{ with } \sigma\in \mathcal{S}_n \text{ and } \varepsilon_i \in \{\pm 1\}\; .
\end{equation*}
Under the ordinary composition of mappings, all signed permutations of the elements of $ [n] $ form a group $ \B_n $ called hyperoctahedral group of rank $ n $. We write $ \pi^k=\underbrace{\pi\circ \cdots \circ\pi}_{k\text{-times}} $ for an integer $ k $ and $ \pi\in\B_n $ and when we multiply permutations, the leftmost permutation acts first. For example,
$$  
\left( \begin{array}{crcc} 1 & 2 & 3 & 4\\ 
1 & -3  & 4 & 2 \end{array} \right)\circ\left( \begin{array}{crcc} 1 & 2 & 3 & 4\\ 
3 & -2  & 4 & 1 \end{array} \right)=\left( \begin{array}{crcr} 1 & 2 & 3 & 4\\ 
3 & -4  & 1 & -2 \end{array} \right)\; .
$$
Before the 1970's, to cipher or decipher a message, two users of cryptographic system must safely exchange information (the private key) which is not known by anyone else. New keys could be periodically distributed so as to keep the enemy guessing.
In 1976, W. Diffie and M. Hellman \cite{Diffie} discovered an entirely different type of cryptosystem for which all of the necessary information to send an enciphered message is publicly available without enabling anyone to read the secret message. With this kind of system called public key, it is possible for two parties to initiate secret communications without exchanging any preliminary information or ever having had any prior contact.

The security of public key cryptosystems  is based on the hardness of some mathematical problems. As a public key cryptosystem consists of a private key (the deciphering key) that is kept secret and a public key (the enciphering key) which is accessible to the public, then the straightforward way to break the system is to draw the private key from the public key. Therefore, the required computation cost is equivalent to solving these difficult mathematical problems.
For instance for the two particularly important examples of public key cryptosystems : RSA and Diffie-Hellman, both are connected with fundamental questions in number theory which are : difficulty of factoring a large composite integer whose prime factors are not known in advance and intractability of discrete logarithm in the multiplicative group of a large finite field, respectively. 
\begin{defn}
	The discrete logarithm problem in the finite group $ G $ to the base $ g\in G $ is the problem : given $ y\in G $, of finding an integer $ x $ such that $ g^x = y $, provided that such an integer 
	exists (in other words, provided that $ y $ is in the subgroup generated by $ g $). 
\end{defn} 
If we really want our random element $ y $ of $ G $ to have a discrete logarithm, $ g $ must be a generator of $ G $.
\begin{ex}
	Let $ G=(\fraction{\mathbb{Z}}{19\mathbb{Z}})^* $ be the multiplicative group of integers modulo $ 19 $. The successive powers of $ 2 $ reduced$ \mod{19} $ are : $ 2, 4, 8, 16, 13, 7, 14, 9, 18, 17, 15, 11, 3, 6, 12, 5, 10, 1 $. Let $ g $ be the generator $ 2 $, then the discrete logarithm of $ 9 $ to the base $ 2 $ is $ 8 $.
\end{ex}
In many ways, the group of signed permutations is analogous to the multiplicative group of a finite field. 
For this purpose, cryptosystems based on the latter can be translated to systems using the hyperoctahedral group.

This work addresses public key cryptosystems based on the group of signed permutations and related to the discrete logarithm problem. We shall illustrate this in section \ref{system} by an hyperoctahedral group analogue for the Diffie-Hellman key exchange system then by describing two hyperoctahedral group public key cryptosystems for transmitting information. We shall also explain why there is no subexponential time algorithms to break the proposed systems. Before introducing the cryptosystems themselves, we give in section \ref{translate} a method  to define a bijection between integers and signed permutations. This time, we first convert the natural number in hyperoctahedral system and then we use the result to compute the corresponding signed permutation by means of subexceedant function.
\section{Integer representation of signed permutations}\label{translate}
\subsection{Converting from one base to another}
\begin{defn}\label{classical}
	Hyperoctahedral number system is a system that expresses all natural number $ n$ of $ \NN $ in the form : 
	\begin{equation}\label{hyper}
		n=\sum_{i=0}^{k(n)} d_i. B_i \; , \text{ where }k(n)\in\NN,\;  d_i \in \{0,1,2,\cdots,2i+1 \}\; \text{ and }\; B_i =2^i i!\;   .
	\end{equation}
\end{defn}
This definition is motivated by the fact that $ B_i $ is the cardinal of the hyperoctahedral group $ \B _i $. 

To denote the non-negative integer $ n $ of the equation (\ref{hyper}) in the hyperoctahedral system, we use by convention the $ (k+1) $-digits representation
$$  d_k d_{k-1} \cdots d_2 d_1 d_0\; . $$
\begin{thm}\label{one}
	Every positive integer has an unique representation in the hyperoctahedral system.
\end{thm}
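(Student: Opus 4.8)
The plan is to treat (\ref{hyper}) as a mixed-radix numeration system and to mimic the classical argument for the factorial base. The structural fact that drives everything is the identity $B_{i+1}=(2i+2)\,B_i$, which is immediate from $B_i=2^i\,i!$, together with its telescoped consequence
\[
\sum_{i=0}^{k}(2i+1)\,B_i = B_{k+1}-1 .
\]
I would prove the latter by induction on $k$: the base case $k=0$ reads $1=B_1-1$, and the inductive step adds $(2k+1)B_k$ to $B_k-1$ to obtain $(2k+2)B_k-1=B_{k+1}-1$. The content of this identity is that the largest integer expressible with digits $d_0,\dots,d_k$ subject to $0\le d_i\le 2i+1$ is exactly $B_{k+1}-1$; equivalently, the admissible digit strings of length $k+1$ should land precisely in $\{0,1,\dots,B_{k+1}-1\}$.

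For uniqueness, suppose $n$ admits two expansions; after padding the shorter one with leading zeros I may assume both use digits $d_0,\dots,d_k$ and $d_0',\dots,d_k'$ with the same $k$. If the strings differ, let $m$ be the largest index with $d_m\neq d_m'$, say $d_m>d_m'$. Subtracting the two expansions gives $(d_m-d_m')\,B_m=\sum_{i=0}^{m-1}(d_i'-d_i)\,B_i$. The left-hand side is at least $B_m$, while the right-hand side is at most $\sum_{i=0}^{m-1}(2i+1)B_i=B_m-1$ in absolute value by the telescoped identity. This contradiction forces $d_i=d_i'$ for all $i$, so the representation is unique.

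For existence I would run a counting argument. Fix $k$ with $n<B_{k+1}$, which is possible since $B_{k+1}\to\infty$. The set of admissible strings $(d_0,\dots,d_k)$ has cardinality $\prod_{i=0}^{k}(2i+2)=2^{k+1}(k+1)!=B_{k+1}$, exactly the size of the target set $\{0,\dots,B_{k+1}-1\}$; the map sending a string to $\sum_i d_iB_i$ lands in this target by the telescoped identity and is injective by the argument just given, hence is a bijection, so $n$ is indeed hit. Trimming leading zeros then yields the digits $d_0,\dots,d_{k(n)}$ of (\ref{hyper}). Alternatively one can construct the digits directly by repeated Euclidean division, setting $d_0=n\bmod 2$ and peeling off successive remainders modulo $2i+2$. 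The only point needing care---and where I expect the argument to be most delicate---is the bookkeeping of the digit bound $d_i\le 2i+1$ together with the well-definedness of the length $k(n)$ across expansions of different nominal lengths; both are handled cleanly once the telescoped identity is in hand, which is why I would establish it first.
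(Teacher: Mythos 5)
Your proof is correct, and while it rests on the same key identity as the paper, it is organized differently in two respects. The telescoped bound $\sum_{i=0}^{k}(2i+1)B_i=B_{k+1}-1$ is exactly the paper's Lemma \ref{lem1} (proved there by telescoping $(2i+1)B_i=B_{i+1}-B_i$ rather than by induction, a cosmetic difference). For uniqueness, the paper forbids leading zeros, first forces the two representations to have equal length by comparing against Lemma \ref{lem1}, and then inducts from the top digit using the uniqueness of the decomposition $N=a_nB_n+r_n$ with $r_n<B_n$ (which is where its Lemma \ref{lem2} enters); you instead pad with zeros and run the standard largest-differing-digit subtraction, bounding $\left|\sum_{i<m}(d_i'-d_i)B_i\right|\le B_m-1<(d_m-d_m')B_m$. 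The two arguments are essentially equivalent, but yours is self-contained and dispenses with the analogue of Lemma \ref{lem2}. The more substantive difference is existence: the paper's proof of Theorem \ref{one} addresses only uniqueness, with existence relegated to the division algorithm sketched informally after the proof, whose correctness (termination, and that each remainder satisfies $d_i\le 2i+1$) is never verified. Your counting argument --- the $\prod_{i=0}^{k}(2i+2)=B_{k+1}$ admissible digit strings inject into $\{0,\dots,B_{k+1}-1\}$ by uniqueness, hence biject onto it --- supplies a clean existence proof and thereby closes a gap the paper leaves open; its only cost is being nonconstructive, which your alternative remark about repeated Euclidean division addresses.
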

Before giving the proof of this theorem, these are some properties of the hyperoctahedral system.

\begin{lem}\label{lem1}
	If $ n= d_k \cdots d_1 d_0 $ is a number in hyperoctahedral system, then
	$$ 0\leqslant n\leqslant B_{k+1}-1 $$
\end{lem}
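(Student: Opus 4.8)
The plan is to establish the two inequalities separately. The lower bound $0 \leqslant n$ is immediate: since every digit satisfies $d_i \geqslant 0$ and each weight $B_i = 2^i i! > 0$, the representation $n = \sum_{i=0}^{k} d_i B_i$ is a sum of non-negative terms and is therefore itself non-negative.

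For the upper bound, the idea is that $n$ is maximized digit by digit precisely when each $d_i$ attains its largest admissible value $2i+1$ (from Definition \ref{classical}). First I would record the recurrence satisfied by the base weights: directly from $B_i = 2^i i!$ one computes
\[
B_{i+1} = 2^{i+1}(i+1)! = 2(i+1)\cdot 2^i i! = 2(i+1)\,B_i .
\]
This elementary identity is the engine of the whole argument.

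Next I would use it to rewrite the maximal contribution of each digit as a difference of consecutive weights. Since $2i+1 = 2(i+1) - 1$, we obtain $(2i+1)B_i = 2(i+1)B_i - B_i = B_{i+1} - B_i$. Hence the largest possible value of $n$ is
\[
\sum_{i=0}^{k} (2i+1) B_i = \sum_{i=0}^{k} \bigl(B_{i+1} - B_i\bigr) = B_{k+1} - B_0 ,
\]
where the sum telescopes. Since $B_0 = 2^0\cdot 0! = 1$, this equals $B_{k+1} - 1$. Consequently, for any choice of admissible digits, $n \leqslant \sum_{i=0}^{k} (2i+1) B_i = B_{k+1} - 1$, which is the desired bound.

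The only point requiring a moment's thought—and really the crux—is spotting the telescoping identity $(2i+1)B_i = B_{i+1} - B_i$; once the recurrence $B_{i+1} = 2(i+1)B_i$ is in hand the sum collapses at once, so I do not anticipate any genuine obstacle. Everything else is routine non-negativity bookkeeping.
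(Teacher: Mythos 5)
Your proof is correct and follows essentially the same route as the paper's: bound each digit by $2i+1$, apply the identity $(2i+1)B_i = B_{i+1} - B_i$ derived from $B_{i+1} = 2(i+1)B_i$, and let the sum telescope to $B_{k+1} - B_0 = B_{k+1} - 1$. No differences worth noting.
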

\begin{proof}
	Since $ 0\leqslant d_i \leqslant 2i+1 $, we have
	$$ 0\leqslant \sum_{i=0}^{k} d_i B_i \leqslant \sum_{i=0}^{k} (2i+1) B_i\, .$$
	Recall that $ B_i =2^i i! $ ,
\begin{eqnarray*}
(2i+1) B_i &=&(2(i+1-1)+1)B_i\\&=&(2(i+1)-1)B_i\\&=& 2(i+1)B_i-B_i\\&=& B_{i+1}-B_i\, .
\end{eqnarray*}
	Therefore,
	$$ \sum_{i=0}^{k} (2i+1) B_i =\sum_{i=0}^{k} (B_{i+1}-B_i) =B_{k+1}-B_0=B_{k+1}-1 \; .$$
\end{proof}
\begin{lem}\label{lem2}
	Let $ n= d_k \cdots d_1 d_0 $ be a number in hyperoctahedral system, then
	$$ d_k B_k\leqslant n < (d_k +1)B_k \; .$$
\end{lem}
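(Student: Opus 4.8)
The plan is to separate the leading term $d_k B_k$ from the rest of the digit expansion and to bound the trailing part using Lemma \ref{lem1}. Starting from the definition, I would write
$$ n = \sum_{i=0}^{k} d_i B_i = d_k B_k + \sum_{i=0}^{k-1} d_i B_i, $$
which splits the claim into a lower and an upper estimate.

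For the lower bound, I observe that each digit $d_i$ and each weight $B_i = 2^i i!$ is non-negative, so the trailing sum $\sum_{i=0}^{k-1} d_i B_i$ is non-negative; discarding it gives $n \geqslant d_k B_k$ immediately.

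For the upper bound, the key point is that the truncated string $d_{k-1} \cdots d_0$ is itself a legitimate hyperoctahedral representation, since its digits still satisfy $0 \leqslant d_i \leqslant 2i+1$. Thus Lemma \ref{lem1}, applied with $k$ replaced by $k-1$, yields $\sum_{i=0}^{k-1} d_i B_i \leqslant B_k - 1$. Adding $d_k B_k$ to both sides gives $n \leqslant d_k B_k + B_k - 1 < (d_k + 1) B_k$, which is exactly the desired strict inequality on the right.

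The argument is essentially a direct reduction to Lemma \ref{lem1}, so I do not anticipate a genuine obstacle; the only thing worth checking is the degenerate case $k = 0$, where the trailing sum is empty and the statement collapses to the trivial $d_0 \leqslant d_0 < d_0 + 1$ (using $B_0 = 1$). The whole proof is therefore a one-line splitting followed by an invocation of the previous lemma.
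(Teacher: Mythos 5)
Your proof is correct and follows essentially the same route as the paper: the paper likewise sets $m = d_{k-1}\cdots d_1 d_0$, invokes Lemma \ref{lem1} to get $0 \leqslant m < B_k$, and concludes from $n = d_k B_k + m$. Your explicit treatment of the lower bound via non-negativity and the check of the case $k=0$ are minor elaborations of the same argument.
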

\begin{proof}
	Let $ m= d_{k-1} \cdots d_1 d_0 $. From lemma \ref{lem1}, $ 0\leqslant m < B_k  $. We also have $ n=d_k B_k +m $, hence
	\[ d_k B_k\leqslant n < B_k +d_k B_k= (1+d_k )B_k \; . \]
\end{proof}
Now, we can proceed to the demonstration of theorem \ref{one} which states the one-to-oneness between non negative integers and hyperoctahedral base numbers.
\begin{proof}
Let $ a_n \cdots a_ 1 a_0 $ with $ a_n \neq 0 $ and $ b_m \cdots b_1 b_0  ,  b_m \neq 0 $ be two representations of a positive integer $ N $ in hyperoctahedral system.
First, $ b_m \geq 1 $ and $ a_n \geq 1 $ imply $$ B_m\leq b_mB_m\leq b_m \cdots b_1 b_0\;\text{ and }B_n\leq a_n \cdots a_ 1 a_0  . $$The relation $ n=m $ is immediate because if $ n<m $ then 
	\begin{equation*}
	B_m\geq B_{n+1}>a_n \cdots a_ 1 a_0   
	\end{equation*}
	by lemma \ref{lem1} so 
	$$
	b_m \cdots b_1 b_0>a_n \cdots a_ 1 a_0  \; .
	$$
	Similarly, if $ m<n $  then
	$$
	a_n \cdots a_ 1 a_0 > b_m \cdots b_1 b_0\; .
	$$
	We get a contradiction. Consequently $ n=m $ and $ a_i=b_i $ for all $ i\in\{0,1,\ldots,n\} $ by induction and by the unicity of the expression $ N=a_n B_n+r_n $ with $ r_n=a_{n-1} \cdots a_ 1 a_0<B_n $.
\end{proof}
To express a positive integer $n$ in the hyperoctahedral system, one proceeds with the following manner. Start by dividing $n$ by $2$ and let $d_0$  be the rest $r_0 $  of the expression
$$n=r_0+(2)q_0 \; .$$
Divide $q_0$ by $4$, and let $d_1$  be the rest $r_1 $ of the expression
$$q_0=r_1+(4)q_1 \; .$$ 
Continue the procedure by dividing $q_{i-1}$ by $2(i+1) $ and taking $d_i:=r_i$ of the expression
$$q_{i-1}=r_i+2(i +1)q_i \; $$
until $ q_l =0 $ for some $ l\in \NN $. In this way, we obtain $ n= d_l:d_{l-1}: \cdots :d_1 :d_0 $ and we also have
\begin{equation*}
	n= d_0 + 2 \left(d_1 + 4 \cdot(d_2 + 2(3) \cdot(d_3 + \cdots))\right) .
\end{equation*}
Now let	$ n= d_{k-1}:d_{k-2}: \cdots :d_1 :d_0 $ be a number in hyperoctahedral system. By definition \ref{classical}, one way to convert $ n $ to the usual decimal system is to calculate
$$ d_{k-1} 2^{k-1}(k-1)! +\cdots +d_1 .2+d_0 \; . $$
In practice, one can use this algorithm :
\begin{center}
	\begin{tabular}{c}
		\hline\\
		\begin{tabular}{lc}
			\verb|Input :|
			& \verb|An integer| $ d_{k-1}:d_{k-2}: \cdots : d_1 :d_0 $ \verb|in hyperoctahedral system.|\\
			\verb|Output :| & \verb|An integer| $ d $ \verb|in decimal system.|\\
			\hline\\
		\end{tabular}\\
		\begin{tabular}{ll}
			1. \textbf{initiate the value of $ d $ }: &  $ d  \leftarrow d_{k-1} $\\
			2. \textbf{for} $ i $ from $ k-1 $ to 1 do : &  $ d \leftarrow  d.2.i+d_{i-1} $\\
			3. \textbf{return} $ d $& 
		\end{tabular}\\
		\hline
	\end{tabular}
\end{center} \vspace{3mm}
\subsection{A bijection between the subexceedant functions and permutations}
	\begin{defn}
		A subexceedant function $ f $ on $ [n] $ is a map $ f : [n]\longrightarrow[n] $ such that
		$$ 1 \leqslant f (i) \leqslant i \text{ for all } i\in[n]\; .$$
	\end{defn}
	We will denote by $ \F_n $ the set of all subexceedant functions on $ [n] $, and we will represent a subexceedant function $ f $ over $ [n] $ by the word $ f (1) f (2) \cdots f (n) $.
	\begin{ex}
		These are the sets $ \F_n $ for $ n=1, 2, 3 $ :
		\begin{eqnarray*}
			&&\F_1=\{1\}\\
			&&\F_2=\{11,12\}\\
			&&\F_3=\{111,112,113,121,122,123\} .
		\end{eqnarray*}
	\end{ex}
	It is easy to verify that $ card\; \F_n = n! $, since from each subexceedant $ f $ over $ [n-1] $, one can obtain $ n $ distinct subexceedant functions over $ [n] $ by adding any integer $ i \in [n] $ at the end of the word representing $ f $.
	
We will give a bijection between $ \S_n $ and $ \F_n $.
	Let be the map 
	$$
	\begin{array}[pos]{rcl}
	\phi : \F_n&\longrightarrow&\S_n\\
	f& \longmapsto& \sigma_f = (1\, f(1))(2\, f(2))\cdots (n\, f(n))
	\end{array}.
	$$
	Notice that there is an abuse of notation in the definition of $ \sigma_f $ . Indeed, if $ f (i) = i $, then the cycle $ (i\, f(i)) = (i) $ does not really denote a transposition but simply the identity permutation.
	\begin{lem}
		The map $ \phi $ is a bijection from $ \F_n $ onto $ \S_n $.
	\end{lem}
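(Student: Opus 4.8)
The plan is to exploit the fact, already recorded in the excerpt, that $\F_n$ and $\S_n$ both have exactly $n!$ elements. Since these are finite sets of equal cardinality, it suffices to prove that $\phi$ is injective; surjectivity, and hence bijectivity, then follows automatically. I would prove injectivity by induction on $n$, the base case $n=1$ being immediate since $\F_1$ and $\S_1$ are both singletons.

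The key structural observation I would establish first is that $\sigma_f(n)=f(n)$ for every $f\in\F_n$. Indeed, writing $\sigma_f=(1\,f(1))\cdots(n-1\,f(n-1))\,(n\,f(n))$ and recalling the convention that the leftmost factor acts first, each of the first $n-1$ transpositions $(i\,f(i))$ involves only indices $i,f(i)\leqslant i\leqslant n-1$ and therefore fixes $n$. Consequently $n$ is carried unchanged through the first $n-1$ factors and is then sent to $f(n)$ by the last transposition, giving $\sigma_f(n)=f(n)$. This lets me recover the value $f(n)$ directly from $\sigma_f$.

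For the inductive step, suppose $\phi(f)=\phi(h)$ for $f,h\in\F_n$. Evaluating both sides at $n$ and using the observation above yields $f(n)=h(n)=:a$. Denoting by $g$ and $g'$ the restrictions of $f$ and $h$ to $[n-1]$ (which are again subexceedant functions, since $1\leqslant f(i)\leqslant i$ for $i\leqslant n-1$, and likewise for $h$), one has $\sigma_f=\sigma_g\,(n\,a)$ and $\sigma_h=\sigma_{g'}\,(n\,a)$. Composing on the right with the involution $(n\,a)$ then gives $\sigma_g=\sigma_f\,(n\,a)=\sigma_h\,(n\,a)=\sigma_{g'}$. By the induction hypothesis applied to $\F_{n-1}$ this forces $g=g'$, i.e. $f(i)=h(i)$ for all $i\leqslant n-1$, and together with $f(n)=h(n)$ we conclude $f=h$.

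I expect the main point requiring care to be the bookkeeping around the composition order: one must consistently use the ``leftmost acts first'' convention both when checking that the initial transpositions fix $n$ and when peeling off the final transposition $(n\,a)$ to pass from $\sigma_f$ to the permutation $\sigma_g$ built from the restriction. Once these two facts are pinned down, the equal-cardinality argument disposes of surjectivity with no further work.
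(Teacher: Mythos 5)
Your proof is correct and follows essentially the same route as the paper's: both use the equal cardinalities $|\F_n|=|\S_n|=n!$ to reduce bijectivity to injectivity, recover $f(n)$ from $\sigma_f(n)$, cancel the last transposition by right-multiplying with $(n\,f(n))$, and descend to $[n-1]$. The only differences are cosmetic: you spell out why $\sigma_f(n)=f(n)$ (the paper asserts it ``by definition'') and you package the paper's iteration as a formal induction on $n$.
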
 
	\begin{proof}
		Since $ \S_n $ and $ \F_n $ both have cardinality $ n! $, it suffices to prove that $ f $ is injective. Let $ f $ and $ g $ be two subexceedant functions on $ [n] $. Assume that $ \phi( f ) = \phi(g) $ i.e. $  \sigma_f=\sigma_g $. So we have :
		\begin{equation}\label{eq_f=g}
		(1\, f(1))(2\, f(2)) \cdots (n\, f(n)) = (1\; g(1))(2\; g(2)) \cdots (n\; g(n)) .
		\end{equation}
		Since $  \sigma_f=\sigma_g $, then in particular $  \sigma_f(n)=\sigma_g(n) $. By definition $ \sigma_f (n)= f (n) $ and $ \sigma_g(n)=g(n) $, so $ f (n) = g(n) $. Let us multiply both members of equation (\ref{eq_f=g}) on the right by the permutation $ (n\; f(n))=(n\; g(n)) $, we obtain :
		\begin{equation*}
		(1\, f(1))(2\, f(2)) \cdots (n-1\hspace{5mm} f(n-1)) = (1\; g(1))(2\; g(2)) \cdots (n-1\hspace{5mm} g(n-1)) .
		\end{equation*}
		Now, if we apply the same process to this equation, we obtain $ f (n-1) = g(n-1) $. By iterating, we 
		conclude that $ f (i) = g(i) $ for all integers $ i\in[n] $ and then $ f = g $.
	\end{proof}
	Let $ \sigma $ be a permutation of the symmetric group $ \S_n $ and $ f $ be the inverse image of $ \sigma $ by $ \phi $. Then $ f $ can be constructed
	as below :
	\begin{enumerate}
		\item Set $ f (n) = \sigma(n) $.
		\item Multiply $ \sigma $ on the right by $ (n\; \sigma(n)) $ (this operation consists in exchanging the image of $ n $ and the image of $ \sigma^{-1}(n) $), we obtain a new permutation $ \sigma_1 $ having $ n $ as a fixed point. Thus $ \sigma_1 $ can be considered as a permutation of $ \S_{n-1} $. Then set $ f (n-1) = \sigma_1(n-1) $.
		\item In order to obtain $ f (n-2) $, apply now the same process to the permutation $ \sigma_1 $ by multiplying $ \sigma_1 $  by $ (n-1\hspace{5mm} \sigma_1(n-1)) $. Iteration determines $ f (i) $ for all integers $ i $ of $ [n] $.
	\end{enumerate}
\subsection{Mapping hyperoctahedral base numbers to signed permutations}
Let	$ d_{n-1}\; d_{n-2}\;  \cdots\;  d_1\; d_0 $ be a $ n $-digits number in hyperoctahedral system. That is $   d_i \in \{0,1,2,\cdots,2i+1 \} $ for $ i=0, \cdots,n-1 $.
\\	
Writing $ d_i =2q_i+r_i $ where $ r_i\in\{0,1\} $ and $ q_i\in\{0,\ldots,i\} $, gives
	\begin{itemize}
		\item	 the subexceedant function $ f=f(1)\cdots f(n) $ with $ f(i)=1+q_{i-1},\;   i=1,\ldots ,n $
		\item and the sequence $ (\varepsilon_1,\ldots,\varepsilon_n) $ with $ \varepsilon_i=(-1)^{r_{i-1}} $ for $ i=1,\ldots ,n  $.
	\end{itemize}
So to each integer $ d_{n-1}\; d_{n-2}\;  \cdots\;  d_1\;  d_0 $, we associate the signed permutation
	\begin{equation*}
	\left( \begin{array}{cccc} 1 & 2 & \dots & n\\ 
	\varepsilon_1 \sigma_1 & \varepsilon_2 \sigma_2 & \dots & \varepsilon_n\sigma_n \end{array} \right) 
	\end{equation*}
	where $ \sigma= \sigma_1 \cdots \sigma_n $ is the permutation associated to the subexceedant function $ f $ by the map 
	$$
	\begin{array}[pos]{rcl}
	\phi : \F_n&\longrightarrow&\S_n\\
	f& \longmapsto& \sigma_f = (1 f (1))(2 f (2))\cdots (n f (n))
	\end{array}.
	$$
	Now, let $ \pi=\left( \begin{array}{cccc} 1 & 2 & \dots & n\\ 
	\varepsilon_1 \sigma_1 & \varepsilon_2 \sigma_2 & \dots & \varepsilon_n\sigma_n \end{array} \right) 
	$ be a signed permutation.
	As $ \phi $ is a bijection, from $ \pi $ we have
	$$
	f=f(1)\cdots f(n)=\phi^{-1}(\sigma)\text{ and }
	r_{i-1}=\begin{cases}
	0& \text{ if }\varepsilon_i=1\\
	1& \text{ if }\varepsilon_i=-1
	\end{cases}\text{ for }i=1, \ldots,n.
	$$
	The digits $ d_i=2(f(i+1)-1)+r_i,\; i=0,\ldots,n-1 $ form the hyperoctahedral number  	$ d_{n-1}\; d_{n-2}\;  \cdots\;  d_1\;  d_0 $. It is easy to verify that $ d_i\in\{0,\ldots,2i+1\} $. We have
\begin{eqnarray*}
&	1\leqslant f(i)\leqslant i &\text{ for }i=1,\ldots,n\\
&	0\leqslant f(i+1)\leqslant i+1 &\text{ for }i=0,\ldots,n-1\\
&	0\leqslant 2(f(i+1)-1)\leqslant 2i &\text{ for }i=0,\ldots,n-1\\
&	0\leqslant d_i\leqslant 2i+1&\text{ because }0\leqslant r_i\leqslant 1.
\end{eqnarray*}
\section{Some cryptosystems based on hyperoctahedral group}\label{system}
We suppose that we are using message units with signed permutations as equivalents (according to section \ref{translate}) in some publicly known large hyperoctahedral group $ \B_n $.
\subsection{Analog of the Diffie-Helman key exchange}
The Diffie-Hellman key exchange \cite{Diffie} which was the first public key cryptosystem originally used the multiplicative group of a finite field. It can be adapted for signed permutations group as follow. 

Suppose that two users Alice and Bob want to agree upon a secret key, a random element of $ \B_n $, which they will use to encrypt their subsequent messages to one another. They first choose a large integer $ n $ for the hyperoctahedral group $ \B_n $ and select a signed permutation $ \beta\in\B_n $ to serve as their "base" and make them public.
Alice selects a random integer $ 0<a<B_n $, which she keeps secret, and computes $ \beta^a\in\B_n $ which she makes public. Bob does the same. He selects a random integer $ 0<b<B_n $, and transmits $ \beta^b $ to Alice over a public channel. Alice and Bob agree on the secret key $ \beta^{ab} $ by computing $ (\beta^b)^a $ and $ (\beta^a)^b $ respectively.
\subsection{Analog of the ElGamal system}
Suppose user "A" requires sending a message $ \mu $ to user "B". As in the key exchange system above, we start with a fixed publicly known hyperoctahedral group $ \B_n $ for a large integer $ n $ and a signed permutation $ \beta\in\B_n $ (preferably, but not necessarily, a generator) as base. Each user selects a random integer $ 0< u<|\beta| $ (instead of $ u $ we take $ a $ for user "A" and $ b $ for user "B") which is kept secret, and computes $ \beta^u $ that he publishes as a public key.  With the public key $ \beta^b $, user "A" encrypt the message $ \mu $ and sends the pair of signed permutations $ (m_1=\beta^a,m_2=\mu.(\beta ^{b})^a) $ to user "B". To recover $ \mu $, B computes $ (m_1^b)^{-1}=((\beta^a)^b)^{-1} $ and multiplies $ m_2 $ on the right by the result. 
\subsection{Analog of the Massey-Omura system for message transmission.}
We assume the same setup as in the previous subsection. Suppose that Alice wants to send Bob a message $ \mu $.
She chooses a random integer $ c $ satisfying $ 0 < c < B_n $ and $ g.c.d.(c,B_n) = 1 $, and
transmits $ \mu^c $ to Bob. Next, Bob chooses a random integer $ d $ with the same properties, and
transmits $ (\mu^c)^d $ to Alice. Then Alice transmits $ (\mu^{cd})^{c'}=\mu^d $ back to Bob, where $ c'c\equiv 1\mod{B_n} $. Finally, Bob computes $ (\mu^d)^{d'} $ where $ d'd\equiv 1\mod{B_n} $.
\subsection{Security}
For the Diffie-Hellman's key exchange above, a third party knows only the elements $ \beta\ ,\beta^a $ and  $ \beta^b $ of $ \B_n $ which are public knowledge. Obtaining $ \beta^{ab} $ knowing only $ \beta^a $ and $ \beta^b $ is as hard as taking the discrete logarithm $ a $ from $ \beta $ and $ \beta^a $ (or $ b $ knowing $ \beta $ and $ \beta^b $). Then an unauthorized third party must solve the discrete logarithm problem to the base $ \beta \in \B_n $ to determine the key. 

Both ElGamal's \cite{ElGamal} and Massey-Omura's \cite{MO} cryptosystems are essentially variants of Diffie-Hellman's key exchange system. Therefore, breaking either of the systems above requires the solution of the discrete logarithm problem for hyperoctahedral group.
\begin{defn}
	The hyperoctahedral discrete logarithm problem to the base $ \beta \in \B_n $ is the problem, given $ \pi \in \B_n $, of finding an integer $ x $ such that $ \pi = \beta^x $ if such $ x $ exists. 
\end{defn}
Many improvements are made for solving the discrete logarithm problem in finite group.
In hyperoctahedral group based cryptosystems of the sort discussed above, one does not work with the entire group $ \B_n $, but rather with cyclic subgroups : the group $ <\beta > $  in the Diffie-Hellman system and ElGamal system and the group $ < \mu > $ in the Massey-Omura system where we designate by
\begin{itemize}
\item $ <\pi >=\{\pi^i\, |\, i=1,\ldots,|\pi|\} $ the cyclic subgroup of $ \B_n $ generated by $ \pi\in\B_n $,
\item $ |\pi| $ the order of the signed permutation $ \pi $.
\end{itemize}
In other words, one works in the subgroup generated by the base of the discrete logarithm in each proposed system.
\paragraph{Choice of a suitable base}
The order of the cyclic subgroup of $ \B_n $ has an important role to avoid an easy solution of the discrete logarithm. Recall that we can also represent a signed permutation $ \pi $ in the form
\begin{equation*}
\pi = \left( \begin{array}{cccc} 1 & 2 & \dots & n\\ 
\varepsilon_1 \sigma(1) & \varepsilon_2 \sigma(2) & \dots & \varepsilon_n\sigma(n) \end{array} \right) \; 
\text{ with } \sigma\in \mathcal{S}_n \text{ and } \varepsilon_i \in \{\pm 1\}\; .
\end{equation*}
As stated in the work of Victor Reiner \cite{Reiner}, a signed permutation decomposes uniquely into a product of commuting cycles just as permutations do. 
\begin{ex}
	The disjoint cycle form of $ \left( \begin{array}{ccrcrrc} 1 & 2 & 3& 4& 5& 6 & 7\\ 
	3 & 6 & -2& 7& -5& -1 & 4 \end{array} \right) $ is  $$ \left( \begin{array}{crcr} 1 & 3& 2& 6\\ 
	3 & -2& 6& -1 \end{array} \right)\left( \begin{array}{cc} 4 & 7\\ 
	7 & 4 \end{array} \right)\left( \begin{array}{c} 5\\ 
	-5 \end{array} \right)\, . $$
\end{ex}
\begin{defn}
	The order of a signed permutation $ \pi $ is the smallest positive integer $ m $ such that $ \pi^m=\iota $ where $ \iota $ denotes the identity permutation.
\end{defn}
An $ \ell $-cycle $ C= \left( \begin{array}{ccccc} i_1 & i_2 & \dots & i_{\ell-1} & i_\ell\\ 
\varepsilon_1 i_2 & \varepsilon_2 i_3 & \dots & \varepsilon_{\ell-1} i_\ell & \varepsilon_\ell i_1 \end{array} \right) $
where $ \varepsilon_i \in \{\pm 1\} $ has order $ \ell $. The proof of the following theorem can be found in \cite{Rotman}.
\begin{thm}
	The order of a permutation written in disjoint cycle form is the least common multiple of the lengths of the cycles.	
\end{thm}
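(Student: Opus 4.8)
The plan is to reduce the statement to the two facts already in hand: that a single $\ell$-cycle has order equal to its length $\ell$ (established immediately before the theorem), and that disjoint cycles commute. Write $\pi = C_1 C_2 \cdots C_r$ for its decomposition into disjoint (signed) cycles, where $C_j$ has length $\ell_j$, and set $L = \operatorname{lcm}(\ell_1, \ldots, \ell_r)$. The goal is to show that the order $|\pi|$ equals $L$.

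First I would record that disjoint cycles commute. If $C_j$ and $C_k$ move disjoint sets of indices, then each fixes every index moved by the other, so $C_j C_k = C_k C_j$. Since all the factors commute, the exponent may be distributed over the product, giving $\pi^m = C_1^m\, C_2^m \cdots C_r^m$ for every integer $m$.

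The key step is the disjoint-support argument. Because the $C_j$ have pairwise disjoint supports, so do the powers $C_j^m$, and hence $\pi^m = \iota$ if and only if $C_j^m = \iota$ for every $j$. Indeed, if $i$ lies in the support of $C_j$, then $i$ is fixed by all the remaining factors, so $\pi^m(i) = C_j^m(i)$; thus $\pi^m = \iota$ forces $C_j^m$ to fix each such $i$, i.e. $C_j^m = \iota$, and the converse is immediate. Invoking the order of a single cycle, $C_j^m = \iota$ holds exactly when $\ell_j \mid m$. Combining, $\pi^m = \iota$ if and only if $\ell_j \mid m$ for all $j$, which is equivalent to $L \mid m$. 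The smallest positive such $m$ is $L$ itself, whence $|\pi| = \operatorname{lcm}(\ell_1, \ldots, \ell_r)$.

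The main obstacle is precisely this disjoint-support step: one must pin down the notion of the support of a signed cycle and check that distributing the exponent preserves disjointness, so that the vanishing of the whole product forces the vanishing of each individual factor. Once commutativity of disjoint cycles and the order of a lone cycle are granted, the remainder is routine divisibility bookkeeping.
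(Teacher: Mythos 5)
Your proposal is correct, but there is nothing in the paper to compare it against line by line: the paper gives no proof at all, deferring to Rotman's textbook for this classical fact. Your write-up is the standard argument one would find there (distribute the exponent over the commuting disjoint cycles, use disjointness of supports to get $\pi^m=\iota$ if and only if $C_j^m=\iota$ for every $j$, then finish with divisibility), and the disjoint-support step you single out as the crux is handled correctly. One caution, however, about the setting in which the paper invokes the theorem. Your first ``fact in hand'' --- that an $\ell$-cycle has order $\ell$, taken from the line preceding the theorem --- is true for ordinary permutations but \emph{false} in the hyperoctahedral group for negative cycles (those whose product of signs is $-1$): the paper's own example contains the $1$-cycle sending $5$ to $-5$, which has order $2$, not $1$; in general a negative $\ell$-cycle has order $2\ell$. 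Consequently the theorem as stated does not hold verbatim for signed permutations (a single negative $1$-cycle already violates it), which is presumably why Rotman, who treats only $\S_n$, can be cited but not blindly transplanted. The good news is that your argument is robust to this defect: it proves, word for word, that $|\pi|$ equals the least common multiple of the \emph{orders} of the disjoint cycles, which is the statement that is actually true in $\B_n$ and reduces to the lcm of the lengths exactly when each cycle's order equals its length. So the only weakness in your proof is inherited from the paper's premise, not introduced by your reasoning; if anything, your self-contained approach exposes and repairs a gap that the citation-only ``proof'' in the paper conceals.
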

Let us now assume that we have $ \beta\in\B_n $ as base of the discrete logarithm. The order of  $ <\beta > $ can be very large for a large value of $ n $, since the disjoint cycles of $ \beta $ can be selected in such a way that their least common multiple be very large. Therefore brute force search is inefficient to solve the discrete logarithm problem to the base $ \beta \in \B_n $.
$ <\beta > $ is a cyclic group of finite order so it is commutative. Thus, in order to avoid an easy solution to the discrete logarithm problem using the techniques that apply to any finite abelian group (which take approximately $ \sqrt{p} $ operations, where $ p $ is the largest prime dividing the order of the group), it is important for the order of the commutative group $ <\beta > $ to be non smooth, that is, divisible by a large prime.
\begin{defn}
	Let $ n $ be a positive real number. we say that $ n $ is smooth if all of the prime factors of $ n $ are small.
\end{defn}
The method of Pohlig-Hellman \cite{PHellman} can efficiently computes the discrete logarithm in a group $ G $ if its order is $ B $-smooth for a reasonably small $ B $.
\begin{defn}
	Let $ B $ be a positive real number. An integer is said to be $ B $-smooth 
	if it is not divisible by any prime greater than $ B $.	
\end{defn} 
The following theorem allows to generate the base $ \beta  $ so that the order of the subgroup $ <\beta > $ of $ \B_n $ have arbitrary smoothness. 
\begin{thm}
	Let $ p $ be a prime. For a large integer $ n $, a cyclic subgroup $ G $ of $ \B_n $ which its order is $ p $-smooth and is not $ (p-1) $-smooth can be constructed.
\end{thm}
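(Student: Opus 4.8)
The plan is to translate the two smoothness conditions into a single divisibility statement about the order of one well-chosen signed permutation $\beta$, and then to realise that order by prescribing the disjoint-cycle type of $\beta$. First I would observe that, since ``$p$-smooth'' means every prime factor is at most $p$, an integer is $p$-smooth but not $(p-1)$-smooth precisely when it is divisible by $p$ and by no prime exceeding $p$; equivalently, $p$ is the largest prime dividing it. So the problem reduces to constructing $\beta \in \B_n$ whose order $|\beta|$ is divisible by $p$ and has no prime factor larger than $p$, and then taking $G = \langle \beta \rangle$, whose order is exactly $|\beta|$.

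The tools are the two facts recalled just above the theorem: a signed permutation decomposes uniquely into disjoint cycles, and its order is the least common multiple of the orders of those cycles. I would therefore build $\beta$ by prescribing its cycle form. The core ingredient is one cycle of length $p$; choosing its signs so that their product is $+1$ makes its order equal to $p$, in accordance with the cited cycle-order fact. A $p$-cycle occupies $p$ of the coordinates, so this already forces $n \ge p$, which is the sense in which $n$ must be ``large.'' If one wants the order of $G$ to be genuinely large rather than merely equal to $p$, I would adjoin further disjoint cycles whose lengths are prime powers $q^{a}$ with $q$ a prime strictly below $p$; each such length is $p$-smooth and coprime to $p$, so the least common multiple stays $p$-smooth while acquiring these extra factors and retaining $p$ as its largest prime. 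For $n$ at least the sum of the chosen cycle lengths, this configuration sits inside $\B_n$.

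The verification is then immediate. By the order-is-lcm theorem, $|\beta| = \operatorname{lcm}(p, q_1^{a_1}, \ldots, q_r^{a_r})$. Because $p$ occurs among the cycle lengths, $p \mid |\beta|$, so $|\beta|$ is not $(p-1)$-smooth; and because every cycle length used is either $p$ or a prime power whose prime is strictly below $p$, no prime exceeding $p$ divides the lcm, so $|\beta|$ is $p$-smooth. Hence $G = \langle \beta \rangle$ is a cyclic subgroup of $\B_n$ with exactly the required smoothness.

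I expect the only genuine subtlety, and the step I would be most careful about, to be the effect of the signs on the cycle orders. A signed $\ell$-cycle has order $\ell$ only when the product of its signs is $+1$; a cycle with sign-product $-1$ has order $2\ell$ instead (for example, the $1$-cycle sending $5 \mapsto -5$ has order $2$, not $1$). This stray factor of $2$ is harmless for $p$-smoothness whenever $p \ge 2$, but it would spoil the clean claim that the $p$-cycle contributes exactly $p$ to the lcm. I would avoid the issue by insisting that the length-$p$ cycle be taken with sign-product $+1$, which pins its order at $p$ and keeps the divisibility-by-$p$ argument airtight.
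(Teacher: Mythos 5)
Your proposal is correct and follows essentially the same route as the paper: build the generator from a disjoint $p$-cycle together with auxiliary disjoint cycles whose lengths contribute only primes at most $p$ to the least common multiple, then read off the order of the cyclic subgroup via the order-is-lcm theorem. In fact you are more careful than the paper on one point: the paper asserts that every signed $\ell$-cycle has order $\ell$ regardless of signs, whereas (as you observe) a cycle with sign-product $-1$ has order $2\ell$, so your insistence that the $p$-cycle have sign-product $+1$ quietly patches a small inaccuracy in the published argument.
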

\begin{proof}
Let $ n $ be a large integer and $ p\leq n $ a prime. There exists a $ p $-cycle $ \gamma_p \in\B_n $. Let $ \gamma_p , C_1 , C_2 , \ldots , C_k\in\B_n $ be disjoint cycles of length respectively $ p , l_1 , l_2 , \ldots , l_k $ with 
$$ 1\leq l_i\leq p \text{ for } i=1,\ldots,k \text{ and } \sum_{i=1}^{k}l_i\leq n-p\; . $$
\\
Let us now consider the signed permutation $ \pi=\gamma_p  C_1  C_2  \ldots  C_k\in\B_n $ of order $$ |\pi|=lcm(p , l_1 , l_2 , \ldots , l_k)\; . $$
Let $ q $ be a prime such that $ q|lmc(p , l_1 , l_2 , \ldots , l_k) $. We show that $ q\leq p $. Let us suppose that $ q>p $. We obtain $ l_i<p<q $ which implies $ q\nmid |\pi| $ so $ q\leq p $. We have just shown that every prime $ q $ dividing the order of $ \pi $ is smaller than $ p $, that is, $ |\pi| $ is $ p $-smooth. The order $ |<\pi>|=|\pi| $ of the cyclic subgroup $ <\pi> $ of $ \B_n $ generated by $ \pi $ is  $ p $-smooth but it is not $ (p-1) $-smooth because $ p\mid |<\pi>| $ and $ p>p-1 $.
\end{proof}
This theorem provides high flexibility in selecting a subgroup of $ \B_n $ on which cryptosystem resists to attacks by Silver-Pohlig-Hellman's algorithm.

The cryptosystems that we have proposed are easy to implement by applying optimized method for exponentiation. Moreover, the multiplication on $ \B_n $ which is the composition of mappings can be performed in time $ \mathcal{O}(n) $. However, as we must work in a very large hyperoctahedral group, the need of large memory from the point of view implementation requires improvements.

\end{document}